\newtheorem{proposition}{Proposition}
\newtheorem{theorem}{Theorem}
\newtheorem{corollary}{Corollary}
\begin{document}

\date{}
\title{\Large \textbf{Heterogeneous Susceptibles--Infectives model: Mechanistic derivation of\\ the power law transmission function } }

\author{A. Novozhilov\footnote{e-mail: anovozhilov@gmail.com}\\[2mm]
{\small \textit{National Institutes of Health, 8600 Rockville
Pike, Bethesda, MD 20894 USA}}} \maketitle

\begin{abstract}In many epidemiological models a nonlinear transmission
function is used in the form of power law relationship. It is
constantly argued that such form reflects population heterogeneities
including differences in the mixing pattern, susceptibility, and
spatial patchiness, although the function itself is considered
phenomenological. Comparison with large-scale simulations show that
models with this transmission function accurately approximate data
from highly heterogeneous sources. In this note we provide a
mechanistic derivation of the power law transmission function,
starting with a simple heterogeneous susceptibles--infectives (SI)
model, which is based on a standard mass action assumption. We also
consider the simplest SI model with separable mixing and compare our
results with known results from the literature.
\paragraph{Keywords:} SI epidemiological model, heterogeneous populations, transmission function, power
law, separable mixing
\paragraph{AMS (MOS) subject classification:} 34C20, 34G20, 92D30
\end{abstract}
\section{Introduction}
It is customary to consider transmission function $T(S,I)$, which
describes the incidence rate, i.e., the number of new cases per
time unit, as a main component of any epidemiological model
\cite{Diekmann2000,mccallum2001}. Here we use usual notations for
susceptible and infective individuals denoting them as $S$ and $I$
respectively. Assuming that there is no influx of susceptible
hosts in out model, we can write that
$$
\frac{d}{dt}S(t)=-T(S,I).
$$
Historically the earliest form of the transmission function was a
simple bilinear form, i.e., $T(S,I)=\beta SI$ \cite{Kermack1927},
which follows from the assumptions of random contacts, host
homogeneity, and application of the law of mass action, thereby
implying that the contact rate of any individual is a linear
function of the population size (see \cite{Diekmann1993} for more
details). Here $\beta>0$ is the transmission coefficient. Under the
proportional mixing assumption (the contact rate is fixed), the
transmission function takes the form $T(S,I)=\beta SI/N$, $N$ is a
population size. If the model includes an assumption of the constant
population size these two transmission functions are virtually the
same from any practical viewpoint, whereas variable population size
can yield dramatically different behaviors (e.g.,
\cite{berezovsky2004sem,Berezovskaya2007b,Novozhilov2006a}).

It was early acknowledged that other than bilinear or proportional
mixing transmission functions should be used in epidemiological
models to provide better fit of the model solutions to empirical
data (see \cite{mccallum2001} for a general account of different
models for transmission functions).

One of the most widely used functions has the following form:
\begin{equation}\label{eq1:1}
    T(S,I)=\beta S^{p}I^{q},\,\quad p,\,q>0.
\end{equation}
We will term this transmission function as \textit{power law
relationship}. It was first used in
\cite{wilson1945lma,wilson1945lmab} in the form $T(S,I)=\beta
S^{p} I$ ``to investigate the consequences of various assumptions
when the laws are not known''. Severo \cite{severo1967} considered
general form \eqref{eq1:1} where both $p$ and $q$ are not equal to
one, though he did not give a detailed analysis of the model. Liu
et al. \cite{Liu1987,Liu1986a} gave a thorough analysis of
different compartmental epidemiological models with \eqref{eq1:1}
and showed that incorporating power law transmission function
yields various dynamical behaviors not observable in models with
bilinear incidence rate, e.g., limit cycles and multiple
equilibrium points. Additional analysis and details of such models
can be found in \cite{hethcote1991sem,Hochberg1991}.

Since the first use of the power law transmission function its form
was explained on a basis of ``intrinsic heterogeneity in mixing
pattern'' of a population under question. The exponents $p$ and $q$
were dubbed as ``heterogeneity parameters'' \cite{severo1967}, but
the model itself is considered phenomenological and lacking
mechanical derivation \cite{mccallum2001} in contract to, e.g.,
bilinear relationship, which is based on a dubious but well
established law of mass action \cite{Heesterbeek2005}.

The link between phenomenological power law incidence rate and
population heterogeneity was made explicit when it was shown that
such mean-field models can provide an accurate approximation to
network based simulations that include variation in the strength,
duration, and number of contacts per person. In \cite{Stroud2006}
the transmission function was used in the form $T(S,I)=\beta S^p I$,
whereas full non-linear transmission function was implemented in
\cite{roy2006}. In both cases it was shown that power law
relationship improves the accuracy of mean-field model predictions
when compared with models with bilinear transmission function (see
also \cite{Bansal2007} for a review on comparison of homogeneous and
heterogeneous models).

In this note we show that power law transmission function can be not
only postulated but also derived, using a simple heterogeneous SI
model. The paper organized as follows. In the next section we
formulate a mechanistic heterogeneous SI model from the first
principles. Section 3 gives a brief exposition of necessary
analytical tools. In Section 4 we present the main results of the
study showing that our heterogeneous model is equivalent to a
homogeneous one, but with a non-linear transmission function.

\section{\large Model formulation}
Heterogeneity profoundly affects the dynamics of infection.
Differences in contact rates, spatial distributions of susceptible
hosts, infectiousness and susceptibility of individuals have a
direct effect on disease dynamics. Here, we specifically look into
heterogeneity in disease parameters (such as susceptibility) do not
touching an important topics of heterogeneity mediated by a
structured variable, such as explicit space or age structure. Our
approach is close to the one given in, e.g.,
\cite{Dushoff1999,Dwyer2002,Veliov2005} (see also
\cite{novozhilov2008sec} for more details).

\paragraph{Model I.} We start with a generic assumption that the subpopulation of
susceptible hosts is heterogeneous, and denote $s(t,\omega)$ the
density of susceptibles at time $t$ having parameter value
$\omega$, which determines susceptibility to a particular disease
and varies from individual to individual. The total size of the
susceptibles is given by $S(t)=\int_{\Omega}s(t,\omega)\,d\omega$,
where $\Omega$ is the set of parameter values. Assuming that the
subpopulation of the infectives is homogeneous (later we relax
this assumption), the contact process is described with the law of
mass action, and the rate of change in the susceptibles is
determined by transmission parameter, which is a function of
$\omega$, we obtain that
\begin{equation}\label{eq2:1}
\frac{\partial}{\partial
t}s(t,\omega)=-\beta(\omega)s(t,\omega)I(t).
\end{equation}
Here $\beta(\omega)$ incorporates information on the contact rate
and the probability of a successful contact.

The change in the infective class is given by
\begin{equation}\label{eq2:2}
    \frac{d}{dt}I(t)=I(t)\int_{\Omega}s(t,\omega)\,d\omega=\bar{\beta}(t)S(t)I(t),
\end{equation}
where we denote
$$
\bar{\beta}(t)\int_{\Omega}p_s(t,\omega)\,d\omega,\quad
p_s(t,\omega)=\frac{s(t,\omega)}{S(t)}\,.
$$
Therefore, $\bar{\beta}(t)$ is the mean value of the function
$\beta(\omega)$ with respect to probability density function
$p_s(t,\omega)$ for any time $t$. We need the initial conditions for
the model \eqref{eq2:1},\eqref{eq2:2}:
\begin{equation}\label{eq2:3}
    s(0,\omega)=s_0(\omega)=S_0p_s(0,\omega),\,I(0)=I_0.
\end{equation}
Here $S_0,\,I_0$ are given numbers, and $p_s(0,\omega)$ is a given
initial distribution of the susceptibility in the population.

We note that formally, after integrating equation \eqref{eq2:1}
with respect to $\omega$, we obtain a homogeneous SI model with
non-constant transmission parameter $\bar{\beta}(t)$ which, in its
turn, depends on the current distribution of susceptibility in the
population. If $\bar{\beta}(t)$ is known then the problem is
solved. Interesting to remark that \textit{ad hoc} approach to use
time-dependent transmission coefficient $\beta(t)$ in an SIR model
was used to approximate a heterogeneous epidemics with a
mean-field model \cite{keeling2005ins}.

\paragraph{Model II.} Let us assume now that not only the susceptibles are heterogeneous for
some trait that influences the disease evolution, but also the
infectives are heterogeneous, and consider the simplest possible SI
model. Let $s(t,\omega_1)$ and $i(t,\omega_2)$ be the densities of
the susceptibles and infectives respectively, here we assume that
the traits of the two classes are independent, i.e.,
$\beta(\omega_1,\omega_2)=\beta_1(\omega_1)\beta_2(\omega_2)$. The
number of susceptibles with the trait value $\omega_1$ infected by
individuals with trait value $\omega_2$ is given by
$\beta_1(\omega_1)s(t,\omega_1)\beta_2(\omega_2)i(t,\omega_2)$, and
the total change in the infective class with trait value $\omega_2$
is
$\beta_2(\omega_2)i(t,\omega_2)\int_{\Omega_1}\beta_1(\omega_1)s(t,\omega_1)\,d\omega_1$;
an analogous expression applies to the change in the susceptible
population. We emphasize that nothing else except for the standard
law of mass action is supposed to formulate the terms for the change
in susceptible and infective subpopulations. Combining the above
assumptions we obtain the following model:
\begin{equation}\label{si1}
    \begin{split}
    \frac{\partial}{\partial t}s(t,\omega_1)  &=-\beta_1(\omega_1)s(t,\omega_1)\int_{\Omega_2}\beta_2(\omega_2) i(t,\omega_2)\, d\omega_2\\
           &=-\beta_1(\omega_1)s(t,\omega_1)\bar{\beta}_2(t)I(t),\\
    \frac{\partial}{\partial t}i(t,\omega_2)  &=\beta_2(\omega_2)i(t,\omega_2)\int_{\Omega_1}\beta_1(\omega_1) s(t,\omega_1)\,
    d\omega_1\\
&=\beta_2(\omega_2)i(t,\omega_2)\bar{\beta}_1(t)S(t).
\end{split}
\end{equation}
Model \eqref{si1} is supplemented with the initial conditions
$s(0,\omega_1)=S_0p_s(0,\omega_1),\,i(0,\omega_2)=I_0p_i(0,\omega_2)$.

In \eqref{si1} it is assumed that if an individual having trait
value $\omega_1$ was infected by an individual with trait value
$\omega_2$ he or she becomes an infective with trait value
$\omega_2$. This is a restrictive assumptions which is necessary to
apply the main theorem from the next section.

The global dynamics of \eqref{si1}, as well as of
\eqref{eq2:1}-\eqref{eq2:3}, is simple and is similar to the
simplest homogeneous SI model.

\paragraph{Model III.} Above we were talking about heterogeneity of
the hosts: whether all susceptible individuals are of the same type
with equal susceptibility, and whether all infectious individuals
have equal ability to infect others. Another aspect of heterogeneity
is the possible heterogeneous social contact network
\cite{Bansal2007}. It is difficult to apply the general theory of
heterogeneous populations (see below) to such models, however, there
is a simple case, for which some results can be obtained.

Let us assume that $n(t,\omega)$ denotes the density of individuals
in the population, which are making $\omega$ contacts on average.
Every individual can be contacted by another individual, which
differs is an average number of contact per individual. This
situation is usually termed as \textit{separable mixing}. If we
denote $r$ the probability of transmission the disease given a
contact, then, the simplest SI-model with separable mixing can be
described by the following system:
\begin{equation}\label{eq2:4}
\begin{split}
\frac{\partial}{\partial t}s(t,w)&=-r\omega s(t,\omega)\frac{\int_{\Omega}\omega i(t,\omega)d\omega}{\int_{\Omega}\omega n_0(\omega)d\omega},\\
\frac{\partial}{\partial t}i(t,w)&=-r\omega
s(t,\omega)\frac{\int_{\Omega}\omega
i(t,\omega)d\omega}{\int_{\Omega}\omega n_0(\omega)d\omega},
\end{split}
\end{equation}
where $s(t,\omega)+i(t,\omega)=n_0(\omega)$ for any $t$, and
$n_0(\omega)$ is a given density which specifies probability density
function of contact distribution. Using the property that
$i(t,\omega)=n_0(\omega)-s(t,\omega)$, we obtain
\begin{equation}\label{eq2:5}
\frac{\partial}{\partial t}s(t,w)=-r\omega
s(t,\omega)\left[1-\frac{\int_{\Omega}\omega
s(t,\omega)d\omega}{\int_{\Omega}\omega n_0(\omega)d\omega}\right].
\end{equation}

Models \eqref{eq2:1}-\eqref{eq2:3}, \eqref{si1}, and \eqref{eq2:5}
are infinite dimensional dynamical systems. The special form of the
models, however, allows us to use well developed tools of the theory
of heterogeneous populations, which are presented in the following
section.

\section{Some facts from the theory of heterogeneous populations}
Here we present some results from the theory of heterogeneous
populations in the form suitable for our goal noting that more
general cases can be analyzed \cite{Karev2005a}. For the proofs we
refer to \cite{karev2006}, where similar models are considered.

Let us assume that there are two interacting populations whose
dynamics depend on trait values $\omega_1$ and $\omega_2$
respectively. The densities are given by $n_1(t,\omega_1)$ and
$n_2(t,\omega_2)$, and the total population sizes
$N_1(t)=\int_{\Omega_1}n_1(t,\omega)\,d\omega_1$ and
$N_2(t)=\int_{\Omega_2}n_2(t,\omega)\,d\omega_2$. Obviously, more
than two populations can be considered, or some populations may be
supposed to be homogeneous. Assume next that the net reproduction
rates of the populations have the specific form which is presented
below:
\begin{equation}\label{s2:1}
    \begin{split}
\frac{\partial}{\partial
    t}n_1(t,\omega_1)&=n_1(t,\omega_1)[f_1(\textbf{v}_1)+\varphi_1(\omega_1)g_1(\textbf{v}_1)],\\
    \frac{\partial}{\partial
    t}n_2(t,\omega_1)&=n_2(t,\omega_2)[f_2(\textbf{v}_2)+\varphi_2(\omega_2)g_2(\textbf{v}_2)],
\end{split}
\end{equation}
where $\textbf{v}_1=(N_1,N_2,\bar{\varphi}_2(t))$,
$\textbf{v}_2=(N_1,N_2,\bar{\varphi}_1(t))$, $\varphi_i(\omega_i)$
are given functions,
$\bar{\varphi}_i(t)=\int_{\Omega_i}\varphi_i(\omega_i)p_i(t,\omega_i)\,d\omega_i$
are the mean values of $\varphi_i(\omega_i)$, and
$p_i(t,\omega_i)=n_i(t,\omega_i)/N_i(t)$ are the corresponding
pdfs, $i=1,2$. We also assume that $\varphi_i(\omega_i)$,
considered as random variables, are independent. The system
\eqref{s2:1} plus the initial conditions
\begin{equation}\label{s2:1a}
n_i(0,\omega_i)=N_i(0)p_i(0,\omega_i),\,i=1,2,
\end{equation}
defines, in general, a complex transformation of densities
$n_i(t,\omega_i)$. An effective approach to analyze models in the
form \eqref{s2:1} was suggested in \cite{Karev2000a} (examples of
model analysis are given in
\cite{Karev2003,Karev2005,karev2006,Novozhilov2004}).

Let us denote
$$
M_i(t,\lambda)=\int_{\Omega_i}e^{\lambda\varphi_i(\omega_i)}p_i(t,\omega_i)\,d\omega_i,\quad
i=1,2,
$$
the moment generating functions (mgfs) of the functions
$\varphi_i(\omega_i)$, $M_i(0,\lambda)$ are the mgfs of the initial
distributions, $i=1,2$, which are given.

Let us introduce auxiliary variables $q_i(t)$ as the solutions of
the differential equations
\begin{equation}\label{s2:2}
    dq_i(t)/dt=g_i(\textbf{v}_i),\quad
    q_i(0)=0,\quad
    i=1,2.
\end{equation}

The following theorem holds
\begin{theorem}\label{th1}
Suppose that $t\in[0,T)$, where $T$ is the maximal value of $t$ such
that \eqref{s2:1}-\eqref{s2:1a} has a unique solution. Then

\emph{(i)} The current means of $\varphi_i(\omega_i),\, i=1,2$, are
determined by the formulas
\begin{equation}\label{s2:3}
    \bar{\varphi}_i(t)=\left.\frac{dM_i(0,\lambda)}{d\lambda}\right|_{\lambda=q_i(t)}\frac{1}{M_i(0,q_i(t))}\,,
\end{equation}
and satisfy the equations
\begin{equation}\label{s2:4}
    \frac{d}{dt}\bar{\varphi}_i(t)=g_i(\textbf{\emph{v}}_i)\sigma_i^2(t),
\end{equation}
where $\sigma_i^2(t)$ are the current variances of
$\varphi_i(t,\omega_i)$, $i=1,2$.

\emph{(ii)} The current population sizes $N_1(t)$ and $N_2(t)$
satisfy the system
\begin{equation}\label{s2:5}
    \frac{d}{dt}N_i(t)=N_i(t)[f_i(\textbf{\emph{v}}_i)+\bar{\varphi}_i(t)g_i(\textbf{\emph{v}}_i)],\quad
    i=1,2.
\end{equation}
\end{theorem}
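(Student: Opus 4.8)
The plan is to integrate the density equations \eqref{s2:1} explicitly, exploiting the fact that along any solution known to exist on $[0,T)$ the scalar quantities $f_i(\mathbf{v}_i)$ and $g_i(\mathbf{v}_i)$ become determined functions of $t$ alone. For each fixed $\omega_i$ equation \eqref{s2:1} is then a scalar linear ordinary differential equation in $t$ whose coefficients involve $\omega_i$ only through the factor $\varphi_i(\omega_i)$. First I would integrate it and use the definition \eqref{s2:2} of the auxiliary variable to write
\[
n_i(t,\omega_i)=n_i(0,\omega_i)\,\exp\!\left[\int_0^t f_i(\mathbf{v}_i)\,d\tau+\varphi_i(\omega_i)\,q_i(t)\right].
\]
The key structural point is that the $\omega_i$-dependence has separated: the factor $\exp[\int_0^t f_i(\mathbf{v}_i)\,d\tau]$ does not depend on $\omega_i$, while the only $\omega_i$-dependent piece is the exponential tilt $e^{\varphi_i(\omega_i)q_i(t)}$.

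Next I would integrate over $\Omega_i$, using \eqref{s2:1a} and the definition of the moment generating function, to obtain $N_i(t)=N_i(0)\exp[\int_0^t f_i(\mathbf{v}_i)\,d\tau]\,M_i(0,q_i(t))$. Forming the ratio $p_i(t,\omega_i)=n_i(t,\omega_i)/N_i(t)$, the $\omega_i$-independent factor together with the initial size cancels, leaving the exponentially tilted distribution
\[
p_i(t,\omega_i)=\frac{p_i(0,\omega_i)\,e^{\varphi_i(\omega_i)q_i(t)}}{M_i(0,q_i(t))}.
\]
This identity is the core of the reduction: it expresses the whole current distribution through the given initial distribution and the single scalar $q_i(t)$.

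Part (i) then follows by direct moment computation. Multiplying the tilted distribution by $\varphi_i(\omega_i)$ and integrating, the numerator becomes $dM_i(0,\lambda)/d\lambda$ evaluated at $\lambda=q_i(t)$, which is exactly \eqref{s2:3}. To establish \eqref{s2:4} I would differentiate $\bar\varphi_i(t)=M_i'(0,q_i(t))/M_i(0,q_i(t))$ in $t$, where $M_i'$ denotes the $\lambda$-derivative; by the chain rule with $dq_i/dt=g_i(\mathbf{v}_i)$ this produces $g_i(\mathbf{v}_i)$ times the $\lambda$-derivative of $M_i'/M_i$, namely $M_i''/M_i-(M_i'/M_i)^2$. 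Evaluated at $\lambda=q_i(t)$ this combination is precisely the second central moment of $\varphi_i$ under the tilted distribution above, i.e.\ the current variance $\sigma_i^2(t)$, so $d\bar\varphi_i/dt=g_i(\mathbf{v}_i)\sigma_i^2(t)$.

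Finally, part (ii) is obtained by differentiating $N_i(t)=\int_{\Omega_i}n_i(t,\omega_i)\,d\omega_i$ under the integral sign and substituting \eqref{s2:1}: since $f_i(\mathbf{v}_i)$ and $g_i(\mathbf{v}_i)$ are constant in $\omega_i$ they pull out, and the two remaining integrals reproduce $N_i(t)$ and $N_i(t)\bar\varphi_i(t)$, yielding \eqref{s2:5}. I expect the main obstacle to be conceptual rather than computational, namely justifying that the coupled scalar coefficients may legitimately be treated as prescribed functions of $t$ when integrating \eqref{s2:1} fibrewise in $\omega_i$; this rests on the hypothesized existence and uniqueness of the solution on $[0,T)$ and on the independence of the $\varphi_i$, which ensures that each $\bar\varphi_i$ depends only on the marginal $p_i$ so that the tilting argument closes separately on each population.
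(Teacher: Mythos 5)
Your proof is correct. The paper itself contains no proof of Theorem~\ref{th1} --- it defers to the cited reference of Karev, Novozhilov and Koonin (2006) --- and your argument (solving \eqref{s2:1} fibrewise in $\omega_i$ along the assumed solution, obtaining the exponentially tilted distribution $p_i(t,\omega_i)=p_i(0,\omega_i)e^{\varphi_i(\omega_i)q_i(t)}/M_i(0,q_i(t))$ via the auxiliary variable $q_i$, and then reading off the mean, variance, and size equations by direct moment computation) is exactly the standard reduction argument used in that reference, including the correct handling of the apparent circularity by invoking existence and uniqueness of the solution on $[0,T)$.
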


From Theorem 1 follows that the analysis of model
\eqref{s2:1}-\eqref{s2:1a} is reduced to analysis of ODE system
\eqref{s2:2},\eqref{s2:3},\eqref{s2:5}, the only thing we need to
know is the mgfs of the initial distributions.

Concluding this sections we note that, with obvious notation
changes, models \eqref{eq2:1}-\eqref{eq2:3} and \eqref{si1} fall
into the general framework of the master model \eqref{s2:1}.

\section{Model analysis}
We start with the model \eqref{eq2:1}-\eqref{eq2:3}, which,
according to Theorem 1, can be written in the form
\begin{equation}\label{ep4:1}
    \begin{split}
    \frac{d}{dt} S(t) &=-\bar{\beta}(t)S(t)I(t),\quad S(0)=S_0,\\
    \frac{d}{dt}  I(t) &=\bar{\beta}(t)S(t)I(t),\quad I(0)=I_0,\\
    \frac{d}{dt}  q(t) &=-I(t),\quad q(0)=0,\\
    \bar{\beta}(t)&=\left.\frac{dM(0,\lambda)}{d\lambda}\right|_{\lambda=q(t)}\frac{1}{M(0,q(t))}\,.
\end{split}
\end{equation}
$M(0,\lambda)$ is a given mgf of $p_s(0,\omega)$.

\begin{proposition}\label{pr1}
Model \eqref{ep4:1} is equivalent to the following model:
\begin{equation}\label{s3:2}
    \begin{split}
    \frac{d}{dt}S(t) & =-h(S(t))I(t),\quad S(0)=S_0,\\
    \frac{d}{dt}I(t) & =h(S(t))I(t),\quad I(0)=I_0.
\end{split}
\end{equation}
where
\begin{equation}\label{s3:3}
    h(S)=S_0\left[\left.\frac{dM^{-1}(0,\xi)}{d\xi}\right|_{\xi=S/S_0}\right]^{-1},
\end{equation}
and $M^{-1}(0,\xi)$ is the inverse function to mgf $M(0,\lambda)$.
\end{proposition}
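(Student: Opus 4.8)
The plan is to show that the reduced system \eqref{ep4:1} and the target system \eqref{s3:2} have identical right-hand sides once the time-dependent mean $\bar\beta(t)$ is re-expressed through the current value $S(t)$; since the two systems already carry the same initial data $S(0)=S_0,\ I(0)=I_0$, equality of the vector fields together with uniqueness of solutions will yield the asserted equivalence. Comparing the $S$-equations of \eqref{ep4:1} and \eqref{s3:2}, the entire task reduces to establishing the pointwise identity $\bar\beta(t)\,S(t)=h(S(t))$ for every $t\in[0,T)$. The pivotal intermediate object is a direct link between the population size $S(t)$ and the moment generating function $M(0,\cdot)$ evaluated at the auxiliary variable $q(t)$.

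First I would establish the key relation
\begin{equation}\label{eqkey}
S(t)=S_0\,M(0,q(t)).
\end{equation}
This can be obtained in two equivalent ways. Solving \eqref{eq2:1} pointwise in $\omega$ gives $s(t,\omega)=s_0(\omega)\exp\bigl(-\beta(\omega)\int_0^t I(\tau)\,d\tau\bigr)=s_0(\omega)e^{\beta(\omega)q(t)}$, where I used $q(t)=-\int_0^t I(\tau)\,d\tau$ from \eqref{ep4:1}; integrating over $\Omega$ and recalling $s_0(\omega)=S_0\,p_s(0,\omega)$ together with the definition of the mgf produces \eqref{eqkey}. Alternatively one may verify \eqref{eqkey} inside the reduced system alone: differentiating $S_0M(0,q(t))$ and using $\dot q=-I$ and $\bar\beta=M'(0,q)/M(0,q)$ reproduces $\dot S=-\bar\beta SI$, while $M(0,0)=1$ gives the correct value at $t=0$, so $S(\cdot)$ and $S_0M(0,q(\cdot))$ solve the same scalar initial-value problem and must coincide.

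Next I would invert \eqref{eqkey}. Since $\beta(\omega)>0$, the derivative $M'(0,\lambda)=\int_\Omega\beta(\omega)e^{\lambda\beta(\omega)}p_s(0,\omega)\,d\omega$ is strictly positive, so $M(0,\cdot)$ is strictly increasing and its inverse $M^{-1}(0,\cdot)$ is well defined on the range of $M(0,q(t))=S(t)/S_0\in(0,1]$; hence $q(t)=M^{-1}(0,S(t)/S_0)$. Differentiating the identity $M\bigl(0,M^{-1}(0,\xi)\bigr)=\xi$ in $\xi$ yields $M'(0,q(t))=\bigl[\,dM^{-1}(0,\xi)/d\xi\,\bigr]^{-1}\big|_{\xi=S/S_0}$. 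Combining this with the formula for $\bar\beta(t)$ in \eqref{s2:3} and the relation \eqref{eqkey}, the factor $M(0,q(t))$ cancels,
\begin{equation}
\bar\beta(t)\,S(t)=\frac{M'(0,q(t))}{M(0,q(t))}\cdot S_0\,M(0,q(t))=S_0\,M'(0,q(t)),
\end{equation}
and substituting the inverse-function expression for $M'(0,q(t))$ gives precisely $h(S)$ as defined in \eqref{s3:3}.

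With $\bar\beta(t)\,S(t)=h(S(t))$ in hand, the $S$- and $I$-equations of \eqref{ep4:1} become $\dot S=-h(S)I$ and $\dot I=h(S)I$, i.e.\ exactly \eqref{s3:2}, and the equivalence follows. I expect the only genuinely delicate point to be the invertibility step: one must ensure that $M(0,\cdot)$ is smooth and strictly monotone on the whole interval swept out by $q(t)$ and that $S(t)/S_0$ remains inside the image of $M(0,\cdot)$, so that $M^{-1}(0,\cdot)$ and its derivative exist along the trajectory. This is guaranteed by the positivity of $\beta$ and the finiteness of the mgf on the relevant range, and it is here that any regularity assumptions on the initial distribution $p_s(0,\omega)$ would enter.
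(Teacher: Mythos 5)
Your proof is correct and takes essentially the same route as the paper's: both hinge on the first integral $S(t)/S_0=M(0,q(t))$, invert it using monotonicity of the mgf to get $q(t)=M^{-1}(0,S(t)/S_0)$, and apply the inverse function theorem to identify $h(S)$ as in \eqref{s3:3}. The only cosmetic difference is that you obtain the first integral by solving \eqref{eq2:1} explicitly (or by a uniqueness argument for the scalar linear ODE), whereas the paper integrates the identity $\frac{d\ln S(t)}{dt}=\frac{d}{dt}\ln M(0,q(t))$ directly.
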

\begin{proof}
The first equation in \eqref{ep4:1} can be rewritten in the form
$$
\frac{1}{S(t)}\frac{d}{dt}S(t)=\bar{\beta}(t)\,\frac{d}{dt}q(t)\,.
$$
$\bar{\beta}(t)$ can be represented as $
\bar{\beta}(t)=\left.\frac{d\ln
M(0,\lambda)}{d\lambda}\right|_{\lambda=q(t)}, $ which gives
$$
\frac{d\ln S(t)}{dt}=\frac{d}{dt}\ln M(0,q(t)),
$$
or, using the initial conditions $S(0)=S_0,\,q(0)=0$,
\begin{equation}\label{s3:4}
    S(t)/S_0=M(0,q(t)),
\end{equation}
which is the first integral to system \eqref{ep4:1}. Knowledge of
a first integral allows to reduce the order of the system by one.
Since $M(0,\lambda)$ is an absolutely monotone function in the
case of nonnegative $\beta(\omega)\geqslant 0$, then it follows
that
\begin{equation}\label{s3:5}
q(t)=M^{-1}\left(0,S(t)/S_0\right),
\end{equation}
where $M^{-1}(0,M(0,\lambda))=\lambda$ for any $\lambda$.

Putting \eqref{s3:5} into \eqref{ep4:1} gives
$$
\frac{d}{dt}S(t)=\left.\frac{dM(0,\lambda)}{d\lambda}\right|_{\lambda=M^{-1}\left(0,S(t)/S_0\right)}S_0I(t),
$$
or, by the inverse function theorem, \eqref{s3:2} with
\eqref{s3:3}.
\end{proof}

Note that model \eqref{si1} can be reduced to four-dimensional
system of ODEs, which, in its turn, can be simplified to
two-dimensional system. The proof is as in Proposition~1.
Formally, we have
\begin{proposition}\label{pr2}
The model \eqref{si1} is equivalent to the model
\begin{equation*}
    \begin{split}
    \frac{d}{dt}S(t) &= -h_1(S)h_2(I),\\
    \frac{d}{dt}I(t) &= h_1(S)h_2(I),
\end{split}
\end{equation*}
where $h_i(x),\,i=1,2$ are given by \eqref{s3:3}.
\end{proposition}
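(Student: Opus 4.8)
The plan is to mirror the reduction of Proposition~\ref{pr1}, now carried out simultaneously in the two trait variables. First I would cast \eqref{si1} into the master form \eqref{s2:1} by setting $n_1=s$, $n_2=i$, $\varphi_1=\beta_1$, $\varphi_2=\beta_2$, $f_1=f_2=0$, and reading off $g_1(\mathbf{v}_1)=-\bar\beta_2(t)I(t)$, $g_2(\mathbf{v}_2)=\bar\beta_1(t)S(t)$. Here $g_1$ depends only on $N_2=I$ and $\bar\varphi_2=\bar\beta_2$, while $g_2$ depends only on $N_1=S$ and $\bar\varphi_1=\bar\beta_1$, so the admissible dependence structure of \eqref{s2:1} is respected. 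Applying Theorem~\ref{th1} then replaces the infinite-dimensional system by the finite one for $S,I,q_1,q_2$, where $dq_1/dt=-\bar\beta_2 I$, $dq_2/dt=\bar\beta_1 S$, the means $\bar\beta_i$ are given through the mgfs by \eqref{s2:3}, and $dS/dt=-\bar\beta_1\bar\beta_2 SI$, $dI/dt=\bar\beta_1\bar\beta_2 SI$.

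Next I would extract two first integrals exactly as in Proposition~\ref{pr1}, one per trait. Dividing the $S$-equation by $S$ and recognizing that $\bar\beta_1=\left.d\ln M_1(0,\lambda)/d\lambda\right|_{\lambda=q_1}$ while $dq_1/dt=-\bar\beta_2 I$, the chain rule gives $d\ln S/dt=(d/dt)\ln M_1(0,q_1(t))$, which integrates (using $S(0)=S_0$, $q_1(0)=0$, $M_1(0,0)=1$) to $S(t)/S_0=M_1(0,q_1(t))$. The identical manipulation of the $I$-equation, using $dq_2/dt=\bar\beta_1 S$, yields $I(t)/I_0=M_2(0,q_2(t))$. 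Absolute monotonicity of the mgfs for nonnegative $\beta_i$, invoked as in Proposition~\ref{pr1}, makes each $M_i(0,\cdot)$ invertible, so $q_1=M_1^{-1}(0,S/S_0)$ and $q_2=M_2^{-1}(0,I/I_0)$.

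Finally I would substitute these back. Since $M_1(0,q_1)=S/S_0$ and $M_2(0,q_2)=I/I_0$, formula \eqref{s2:3} gives $\bar\beta_1 S=S_0 M_1'(0,q_1)$ and $\bar\beta_2 I=I_0 M_2'(0,q_2)$; the inverse function theorem, precisely as at the end of Proposition~\ref{pr1}, converts these into $\bar\beta_1 S=h_1(S)$ and $\bar\beta_2 I=h_2(I)$ with $h_i$ as in \eqref{s3:3}. Writing the reduced field as $\bar\beta_1\bar\beta_2 SI=(\bar\beta_1 S)(\bar\beta_2 I)$ then produces $dS/dt=-h_1(S)h_2(I)$ and $dI/dt=h_1(S)h_2(I)$, as claimed.

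I expect the main obstacle to be structural rather than computational: one must confirm that \eqref{si1} genuinely fits \eqref{s2:1}, which hinges on the restrictive bookkeeping assumption already flagged after \eqref{si1}, namely that a newly infected $\omega_1$-host inherits the trait $\omega_2$ of its infector. Only under this assumption is the infective production term proportional to $i(t,\omega_2)$ itself, so that the multiplicative form of \eqref{s2:1} is preserved and Theorem~\ref{th1} applies. The one genuinely new feature beyond Proposition~\ref{pr1} is that the mean-field coefficient is now a product $\bar\beta_1\bar\beta_2$; the reduction must therefore be carried out independently in each trait and the factor split cleanly as $(\bar\beta_1 S)(\bar\beta_2 I)$, which is exactly what the assumed independence of $\beta_1$ and $\beta_2$ in \eqref{si1} and \eqref{s2:1} guarantees by decoupling the two inversions.
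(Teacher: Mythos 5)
Your proposal is correct and follows exactly the route the paper intends: the paper's own ``proof'' is just the remark that model \eqref{si1} reduces via Theorem~\ref{th1} to a four-dimensional ODE system and then, ``as in Proposition~1,'' to a two-dimensional one, and you have filled in precisely that argument --- the two first integrals $S/S_0=M_1(0,q_1)$, $I/I_0=M_2(0,q_2)$, the inversion of each mgf, and the clean factorization $\bar\beta_1\bar\beta_2 SI=(\bar\beta_1 S)(\bar\beta_2 I)=h_1(S)h_2(I)$. Your structural remark about the trait-inheritance assumption being what keeps \eqref{si1} inside the framework of \eqref{s2:1} likewise matches the paper's own comment following \eqref{si1}.
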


Combining together Propositions 1 and 2 we obtain the main result
of the present note.
\begin{theorem}
A heterogeneous SI model in the form \eqref{eq2:1}-\eqref{eq2:3}, or
in the form \eqref{si1}, which both describe the contact process
with the help of the law of mass action and model heterogeneities in
disease parameters such as susceptibility to a disease or
infectivity of an individual, are equivalent to a homogeneous SI
model with a nonlinear transmission function.
\end{theorem}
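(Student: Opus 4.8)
The plan is to obtain the theorem as an immediate consequence of Theorem~\ref{th1} together with Propositions~\ref{pr1} and~\ref{pr2}, since the substantive analytic work has already been discharged there. First I would observe that both heterogeneous models are special cases of the master form \eqref{s2:1}: model \eqref{eq2:1}-\eqref{eq2:3} is the instance in which the infective class is homogeneous and only $\varphi=\beta$ varies over the susceptibles, while model \eqref{si1} is the genuine two-population case with $\varphi_1=\beta_1$ and $\varphi_2=\beta_2$, the factors $f_i$ and $g_i$ being read off from the right-hand sides. Applying Theorem~\ref{th1} therefore collapses each infinite-dimensional system to a finite ODE system for the aggregate sizes $S(t)$, $I(t)$ and the auxiliary variables $q_i(t)$, as already displayed in \eqref{ep4:1} for the first model.

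Next I would invoke the two propositions to eliminate the auxiliary variables and expose the transmission function. Proposition~\ref{pr1} shows that the reduced system \eqref{ep4:1} is exactly the closed planar system \eqref{s3:2}, governed by the single scalar function $h$ of \eqref{s3:3}; writing it as $\frac{d}{dt}S=-T(S,I)$, $\frac{d}{dt}I=T(S,I)$ identifies the effective transmission function as $T(S,I)=h(S)\,I$, that is, a homogeneous SI model. An analogous argument for the two-population system, carried out exactly as in the proof of Proposition~\ref{pr1} and recorded in Proposition~\ref{pr2}, yields the separable transmission function $T(S,I)=h_1(S)h_2(I)$ for model \eqref{si1}. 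In both cases the passage from the heterogeneous to the homogeneous description is a true \emph{equivalence} rather than an approximation: the first integral \eqref{s3:4} recovers $q(t)$, and hence the entire density $s(t,\omega)=S_0\,p_s(0,\omega)\exp(-\beta(\omega)\,\text{[accumulated force of infection]})$, from the scalar trajectory $S(t)$ alone, so no information is lost in the reduction.

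The only remaining point is interpretive: I would check that $T$ is genuinely \emph{nonlinear}. Since $h(S)=\beta S$ forces $M(0,\lambda)=e^{\beta\lambda}$, i.e.\ a degenerate susceptibility distribution, any nondegenerate distribution produces an $h$ not proportional to $S$, and likewise for each factor $h_i$; thus heterogeneity itself is what generates the nonlinearity. Here it is natural to note, connecting back to the Introduction, that a gamma-distributed trait makes $M$ of the form $(1-\lambda/\theta)^{-k}$, whence $h$ becomes a pure power, recovering $T=\beta S^p I$ for model~\eqref{eq2:1}-\eqref{eq2:3} and the full power law $T=\beta S^p I^q$ for model~\eqref{si1}. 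I expect no genuine obstacle at this stage, because the one delicate step, the inversion of the moment generating function, was already justified inside Proposition~\ref{pr1}: for $\beta(\omega)\geqslant 0$ the function $M(0,\lambda)$ is absolutely monotone, hence strictly increasing and invertible, which is precisely what legitimizes \eqref{s3:5} and the inverse-function-theorem computation of $h$. With both reductions assembled, the theorem follows by simply reading off $T(S,I)$ in each case.
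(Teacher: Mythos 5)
Your proposal is correct and follows essentially the same route as the paper: the theorem is obtained there by simply combining Propositions~1 and~2 (which themselves rest on Theorem~1's reduction of the infinite-dimensional models to ODE systems), exactly as you do. Your additional remarks --- that the first integral makes the reduction a genuine equivalence, and that nondegeneracy of the trait distribution is what forces $h$ to be nonlinear --- are sound elaborations the paper leaves implicit, but they do not change the argument.
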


An analogous conjecture was made in \cite{Veliov2005}, where a
substantially more complex model is analyzed. The strength of
Theorem 2 is that it provides an explicit form for the nonlinear
transmission function.

Consider a standard gamma distribution with parameters $k$ and
$\nu$:
\begin{equation}\label{gamma}
    p(0,\omega)=\frac{\nu^k}{\Gamma(k)}\omega^{k-1}e^{-\nu\omega},\quad
    \omega\geqslant 0,\,k>0,\,\nu>0.
\end{equation}
Let us assume that $\beta(\omega)=\omega$. The mgf of
gamma-distribution is then
$$M(0,\lambda)=(1-\lambda/\nu)^{-k}.$$ Using Proposition 1 we
obtain that
\begin{equation}\label{gd1}
h(S)=\frac{kS}{\nu}\left[\frac{S}{S_0} \right]^{1/k}.
\end{equation}
From \eqref{gd1} it immediately follows
\begin{corollary}
The power relationship \eqref{eq1:1} with $q=1,p=1+1/k$ can be
obtained as a consequence of the heterogenous SI model with
distributed susceptibility when the initial distribution is a
gamma-distribution with parameters $k$ and $\nu$.
\end{corollary}

\begin{corollary}
The power relationship \eqref{eq1:1} with $q=1+1/k_2,p=1+1/k_1$ can
be obtained as a consequence of the heterogenous SI model with
distributed susceptibility and infectivity when the initial
susceptibility distribution is a gamma-distribution with parameters
$k_1$ and $\nu_1$, and the initial infectivity distribution is a
gamma-distribution with parameters $k_2$ and $\nu_2$.
\end{corollary}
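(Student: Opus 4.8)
The plan is to reduce the claim to the product form furnished by Proposition~\ref{pr2} and then to evaluate the two factors by reusing the gamma computation already carried out in the single-trait case. By Proposition~\ref{pr2}, Model~\eqref{si1} with independent susceptibility and infectivity traits is equivalent to the homogeneous SI system whose incidence rate factorizes as $T(S,I)=h_1(S)\,h_2(I)$, with $h_1$ and $h_2$ each given by \eqref{s3:3}: the factor $h_1$ is built from the initial susceptibility distribution (normalized by $S_0$), and $h_2$ from the initial infectivity distribution (normalized by $I_0$).

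First I would treat the susceptibility factor. Taking $\beta_1(\omega_1)=\omega_1$ and the initial susceptibility distribution gamma with parameters $k_1,\nu_1$, the relevant mgf is $M_1(0,\lambda)=(1-\lambda/\nu_1)^{-k_1}$. This is exactly the situation of the preceding corollary, so \eqref{s3:3} reproduces \eqref{gd1} with $(k,\nu)$ replaced by $(k_1,\nu_1)$, giving
\[
h_1(S)=\frac{k_1 S}{\nu_1}\left[\frac{S}{S_0}\right]^{1/k_1}\propto S^{1+1/k_1}.
\]
The infectivity factor is obtained by the identical computation after the substitution $(k_1,\nu_1,S,S_0)\mapsto(k_2,\nu_2,I,I_0)$, which yields
\[
h_2(I)=\frac{k_2 I}{\nu_2}\left[\frac{I}{I_0}\right]^{1/k_2}\propto I^{1+1/k_2}.
\]

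Multiplying the two factors then gives $T(S,I)=h_1(S)h_2(I)=\beta\,S^{1+1/k_1}I^{1+1/k_2}$ with constant $\beta=\frac{k_1 k_2}{\nu_1\nu_2 S_0^{1/k_1}I_0^{1/k_2}}$, which is precisely the power law \eqref{eq1:1} with $p=1+1/k_1$ and $q=1+1/k_2$. The only point requiring care, and the step I would check most carefully, is the factorization asserted in Proposition~\ref{pr2}: one must confirm that the independence of the two traits indeed decouples the transmission term into a product in which each factor depends only on its own aggregate variable and is computed from its own marginal mgf via \eqref{s3:3}. Once this product structure is in hand, the remaining gamma algebra is routine and inherited verbatim from the one-dimensional case.
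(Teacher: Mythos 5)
Your proof is correct and follows essentially the same route as the paper: Corollary~2 is obtained there precisely by combining the factorized transmission $T(S,I)=h_1(S)h_2(I)$ from Proposition~\ref{pr2} with the gamma-mgf computation \eqref{gd1} applied separately to the susceptibility factor (with $k_1,\nu_1,S_0$) and the infectivity factor (with $k_2,\nu_2,I_0$). Your explicit constant $\beta=\frac{k_1k_2}{\nu_1\nu_2 S_0^{1/k_1}I_0^{1/k_2}}$ and the exponents $p=1+1/k_1$, $q=1+1/k_2$ agree with what the paper's argument yields.
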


Summarizing, we provided a mechanistic derivation of the power law
transmission function, which was used phenomenologically in many
epidemiological models, in the case when heterogeneity parameters
$p,\,q$ exceed one. Originally, these exponents were considered to
be less than one (e.g., in \cite{severo1967} they are put in the
form $p=1-a,\,q=1-b$), but no comparison with real world data was
provided.

There is no universal agreement on the values of parameters
$p,\,q$ in \eqref{eq1:1}. In \cite{roy2006} these parameters were
estimated when the incidence rate was inferred from epidemic
simulations on random networks with different degree
distributions. In all experiments values of $p$ and $q$ were
estimated to be less than 1. In contrast to the last observation,
in \cite{Stroud2006}, where the transmission function has the form
$T(S,I)=\beta S^pI$, it was argued that the exponent $p$ should be
greater than one. Fitting the solutions of the mean field model
with nonlinear transmission function into the data obtained from
large-scale simulations, it was found that $p$ can range from 1.6
to 2.

In any respect, the question of deriving the power law
transmission function on a solid mechanistic bases for the case
$p,\,q<1$ remains open, whereas the case $p,\,q>1$ is fully
covered by Corollaries 1 and 2.

\section{Model III and separable mixing}
We rewrite equation \eqref{eq2:5} in the form
\begin{equation}\label{eq5:1}
\frac{\partial}{\partial t}s(t,w)=-r\omega
s(t,\omega)\left[1-\frac{\bar{\omega}(t)S(t)}{K}\right],
\end{equation}
where $K$ is the number of contacts, which are made by the total
population, $\bar{\omega}(t)$ is the average number of contacts made
by one susceptible individual at time $t$. We note that formally eq.
\eqref{eq5:1} is not covered by Theorem \ref{th1}, because its
growth coefficient depends on the average parameter value
$\bar{\omega}(t)$. However, it is possible to extend the theory
presented in Section 3 to such cases with minor changes in notations
(Karev, personal communication). In particular, it is possible to
show that equation \eqref{eq5:1} is equivalent to the following
ordinary differential equation:
\begin{equation}\label{eq5:2}
\frac{d}{dt}S(t)=-rh(S)\left[1-\frac{h(S)}{K}\right],
\end{equation}
where $h(S)$ is given by \eqref{s3:3}.

It is interesting to note that we can compare solutions of
\eqref{eq5:1} with solutions of the system of ODEs, obtained as a
result of large mixing rates in the model on dynamic contact network
\cite{volz2007sir}. For SI-model system (2.22)-(2.23) from the cited
work reads
\begin{equation}\label{eq5:3}
    \begin{split}
    \dot{\theta} & =-rM_I\theta,\\
     \dot{M}_I   & =\frac{rM_I}{g'(1)}(\theta
     g'(\theta)+\theta^2g''(\theta)),
\end{split}
\end{equation}
where $g(x)$ is the probability generation function for the
distribution of the number of contacts in the population (this is
PGF for pdf $n_0(\omega)/\int_{\Omega}n_0(\omega)d\omega$);
$\theta(t)$ is the fraction of individuals that have only one
contact and still susceptible by the time $t$; $r$ is the
transmission rate; and $M_I$ is the fraction of contacts made by
infected individuals. The number of susceptible individuals is given
by $S(t)=g(\theta(t))$.

To compare models \eqref{eq5:2} and \eqref{eq5:3} we need to specify
the initial conditions. Since model \eqref{eq5:3} deals with PGF of
the number of contacts of the total population, and eq.
\eqref{eq5:2} incorporates mgf of the number of contacts of
susceptible individuals it is reasonable to expect some discrepancy
of the corresponding solutions if we use the same pdf for these
purposes. See Fig. \ref{f1} for three solutions.

\begin{figure}[tbh!]
\centering
\includegraphics[width=0.9\textwidth]{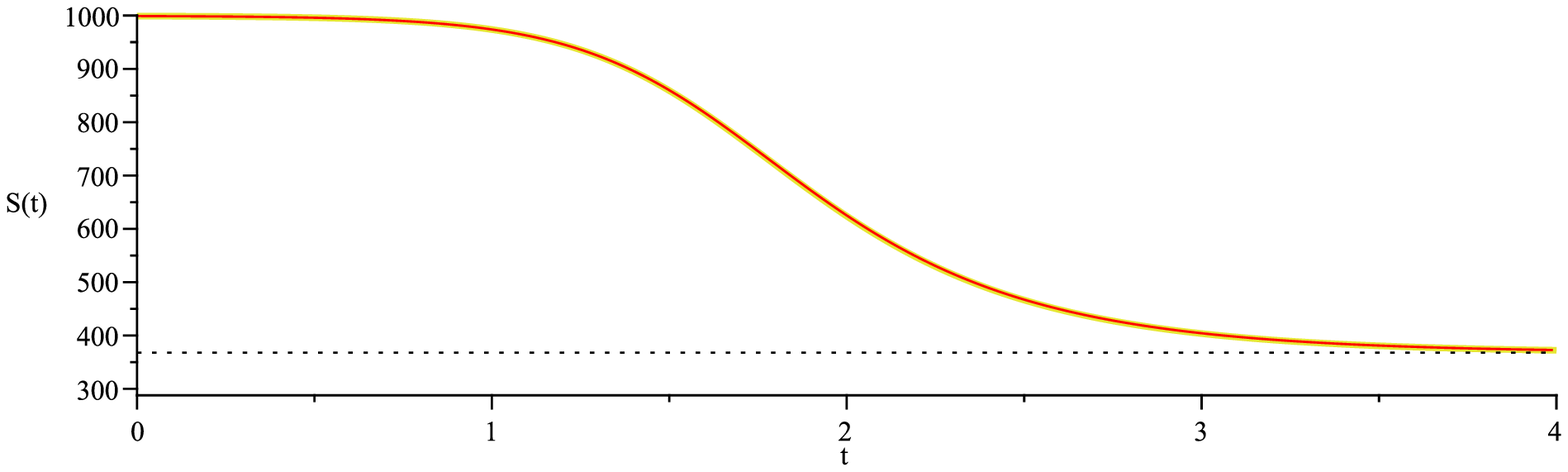}
\includegraphics[width=0.9\textwidth]{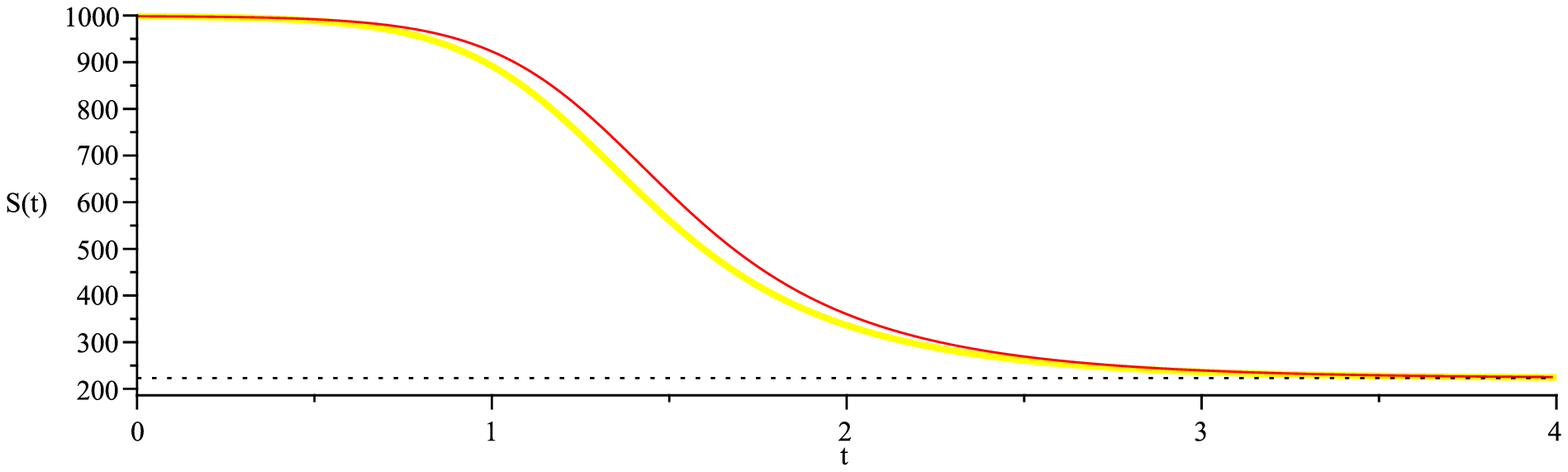}
\includegraphics[width=0.9\textwidth]{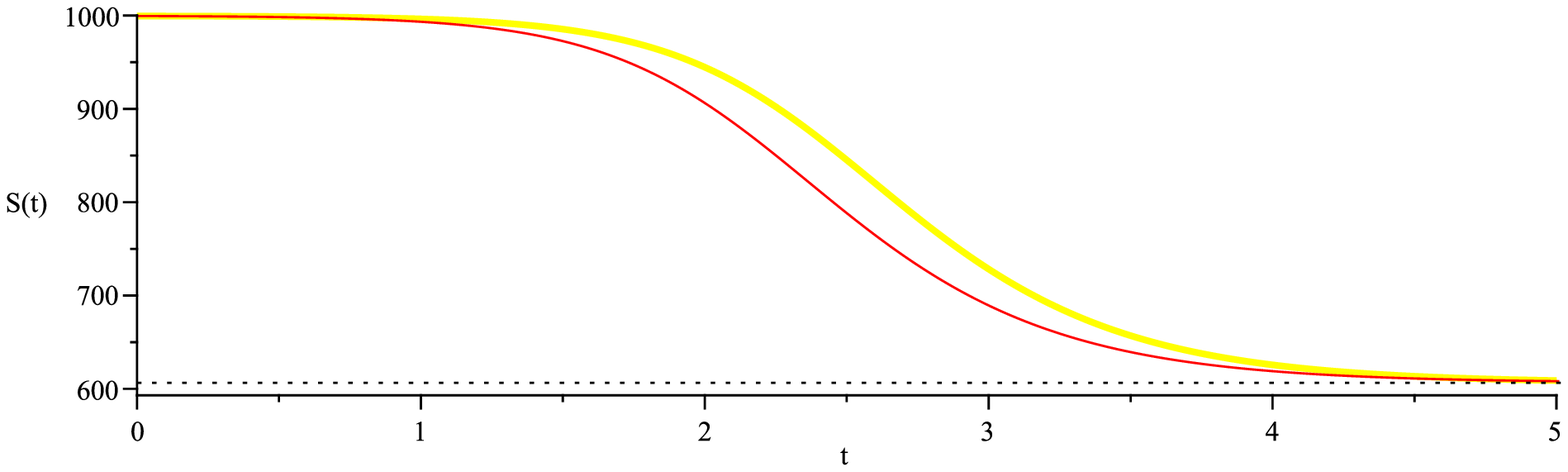}
\caption{Comparison of the solutions of system \eqref{eq5:3} (red
line) with solution of eq. \eqref{eq5:2} (yellow bold line). Poisson
distribution was used with parameters $\lambda=1,\,1.5,\,0.5$ from
top to bottom. $r=2$. It was assumed that the population size is
$N=1000$. The initial conditions for problem \eqref{eq5:3} were
chosen such that $\theta(0)=1-\varepsilon,\,M_I(0)=\varepsilon$,
where $\varepsilon=0.01$. $S(0)$ for \eqref{eq5:2} was found as
$g(1-\varepsilon)$. The dotted line shows $Np_0$, where $p_0$ is the
proportion of individuals in the population who do not make the
contacts}\label{f1}
\end{figure}

As can be seen from Fig. \ref{f1} the best agreement os found when
we use $\lambda=1$, i.e., the average number of contacts equals to
1. In this case two solutions coincide. In the cases $\lambda>1$ or
$\lambda<1$ there is some divergence, although the limiting behavior
of the models is the same.

\clearpage


\begin{thebibliography}{10}

\bibitem{Bansal2007}
S.~Bansal, B.~T. Grenfell, and L.~A. Meyers.
\newblock {When individual behaviour matters: homogeneous and network models in
  epidemiology}.
\newblock {\em Journal of Royal Sosciety Interface}, 4(16):879--891, 2007.

\bibitem{Berezovskaya2007b}
F.~S. Berezovskaya, A.~S. Novozhilov, and G.~P. Karev.
\newblock Population models with singular equilibrium.
\newblock {\em Mathematical Biosciences}, 208(1):270--299, Jul 2007.

\bibitem{berezovsky2004sem}
F.~Berezovsky, G.~Karev, B.~Song, and C.~Castillo-Chavez.
\newblock {A simple epidemic model with surprising dynamics}.
\newblock {\em Mathematical Biosciences and Engineering}, 1:133--152, 2004.

\bibitem{Diekmann2000}
O.~Diekmann and J.~A.~P. Heesterbeek.
\newblock {\em {Mathematical Epidemiology of Infectious Diseases: Model
  Building, Analysis and Interpretation}}.
\newblock John Wiley, 2000.

\bibitem{Diekmann1993}
O.~Diekmann, J.~A.~P. Heesterbeek, and J.~A.~J. Metz.
\newblock {The legacy of Kermack and McKendrick}.
\newblock In D.~Mollison, editor, {\em {Epidemic Models: Their Structure and
  Relation to Data}}, pages 95--115. Cambridge University Press, 1993.

\bibitem{Dushoff1999}
J.~Dushoff.
\newblock {Host Heterogeneity and Disease Endemicity: A Moment-Based Approach}.
\newblock {\em Theoretical Population Biology}, 56(3):325--335, 1999.

\bibitem{Dwyer2002}
G.~Dwyer, J.~Dushoff, J.~S. Elkinton, J.~P. Burand, and S.~A. Levin.
\newblock {Host heterogeneity in susceptibility: Lessons from an insect virus}.
\newblock In U.~Diekmann, H.~Metz, M.~Sabelis, and K.~Sigmund, editors, {\em
  {Virulence Managemnt: The Adaptive Dynamics of Pathogen-Host Interactions}},
  pages 74--84. Cambridge Univercity Press, 2002.

\bibitem{Heesterbeek2005}
J.~A.~P. Heesterbeek.
\newblock The law of mass-action in epidemiology: a historical perspective.
\newblock In B.~E. Beisner, editor, {\em {Ecological Paradigms Lost: Routes of
  Theory Change}}, pages 81--104. Academic Press, 2005.

\bibitem{hethcote1991sem}
H.~W. Hethcote and P.~Driessche.
\newblock {Some epidemiological models with nonlinear incidence}.
\newblock {\em Journal of Mathematical Biology}, 29(3):271--287, 1991.

\bibitem{Hochberg1991}
M.~E. Hochberg.
\newblock {Non-linear transmission rates and the dynamics of infectious
  disease}.
\newblock {\em Journal of Theoretical Biology}, 153(3):301--321, Dec 1991.

\bibitem{Karev2000a}
G.~P. Karev.
\newblock Heterogeneity effects in population dynamics.
\newblock {\em Doklady Mathematics}, 62(1):141--144, 2000.

\bibitem{Karev2003}
G.~P. Karev.
\newblock Inhomogeneous models of tree stand self-thinning.
\newblock {\em Ecological Modelling}, 160(1-2):23--37, 2003.

\bibitem{Karev2005a}
G.~P. Karev.
\newblock {Dynamics of Heterogeneous Populations and Communities and Evolution
  of Distributions}.
\newblock {\em Discrete and Continuous Dynamical Systems}, Suppl.:487--496,
  2005.

\bibitem{Karev2005}
G.~P. Karev.
\newblock Dynamics of inhomogeneous populations and global demography models.
\newblock {\em Journal of Biological Systems}, 13(1):83--104, 2005.

\bibitem{karev2006}
G.~P. Karev, A.~S. Novozhilov, and E.~V. Koonin.
\newblock {Mathematical modeling of tumor therapy with oncolytic viruses:
  Effects of parametric heterogeneity on cell dynamics}.
\newblock {\em Biology Direct}, 1(30):19, 2006.

\bibitem{keeling2005ins}
M.~Keeling.
\newblock {The implications of network structure for epidemic dynamics}.
\newblock {\em Theoretical Population Biology}, 67(1):1--8, 2005.

\bibitem{Kermack1927}
W.~O. Kermack and A.~G. McKendrick.
\newblock {A Contribution to the Mathematical Theory of Epidemics}.
\newblock {\em Proceedings of the Royal Society of London. Series A},
  115(772):700--721, 1927.

\bibitem{Liu1987}
W.~M. Liu, H.~W. Hethcote, and S.~A. Levin.
\newblock Dynamical behavior of epidemiological models with nonlinear incidence
  rates.
\newblock {\em Journal of Mathematical Biology}, 25(4):359--380, 1987.

\bibitem{Liu1986a}
W.~M. Liu, S.~A. Levin, and Y.~Iwasa.
\newblock {Influence of nonlinear incidence rates upon the behavior of SIRS
  epidemiological models}.
\newblock {\em Journal of Mathematical Biology}, 23(2):187--204, 1986.

\bibitem{mccallum2001}
H.~McCallum, N.~Barlow, and J.~Hone.
\newblock {How should pathogen transmission be modelled?}
\newblock {\em Trends in Ecology \& Evolution}, 16(6):295--300, 2001.

\bibitem{Novozhilov2004}
A.~S. Novozhilov.
\newblock {Analysis of a generalized population predator--prey model with a
  parameter distributed normally over the individuals in the predator
  population.}
\newblock {\em Journal of Computer and System Sciences International},
  43(3):378--382, 2004.

\bibitem{Novozhilov2006a}
A.~S. Novozhilov, F.~S. Berezovskaya, E.~V. Koonin, and G.~P. Karev.
\newblock {Mathematical modeling of tumor therapy with oncolytic viruses:
  Regimes with complete tumor elimination within the framework of deterministic
  models}.
\newblock {\em Biology Direct}, 1(6):18, 2006.

\bibitem{novozhilov2008sec}
A.S. Novozhilov.
\newblock {On the spread of epidemics in a closed heterogeneous population}.
\newblock Mathematical Biosciences, accepted, 2008.

\bibitem{roy2006}
M.~Roy and M.~Pascual.
\newblock On representing network heterogeneities in the incidence rate of
  simple epidemic models.
\newblock {\em Ecological Complexity}, 3(1):80--90, 2006.

\bibitem{severo1967}
N.~C. Severo.
\newblock {Generalizations of Some Stochastic Epidemic Models}.
\newblock {\em Mathematical Biosciences}, 4:395--402, 1969.

\bibitem{Stroud2006}
P.~D. Stroud, S.~J. Sydoriak, J.~M. Riese, J.~P. Smith, S.~M.
Mniszewski, and
  P.~R. Romero.
\newblock {Semi-empirical power-law scaling of new infection rate to model
  epidemic dynamics with inhomogeneous mixing}.
\newblock {\em Mathematical Biosciences}, 203(2):301--318, Oct 2006.

\bibitem{Veliov2005}
V.~M. Veliov.
\newblock On the effect of population heterogeneity on dynamics of epidemic
  diseases.
\newblock {\em Journal of Mathematical Biology}, 51(2):123--143, 2005.

\bibitem{volz2007sir}
E.~Volz and L.A. Meyers.
\newblock {Susceptible--infected--recovered epidemics in dynamic contact
  networks}.
\newblock {\em Proceedings of the Royal Society B: Biological Sciences},
  274(1628):2925--2933, 2007.

\bibitem{wilson1945lma}
E.~B. Wilson and J.~Worcester.
\newblock {The Law of Mass Action in Epidemiology}.
\newblock {\em Proceedings of the National Academy of Sciences of the United
  States of America}, 31(1):24--34, 1945.

\bibitem{wilson1945lmab}
E.~B. Wilson and J.~Worcester.
\newblock {The Law of Mass Action in Epidemiology, II}.
\newblock {\em Proceedings of the National Academy of Sciences of the United
  States of America}, 31(4):109--116, 1945.

\end{thebibliography}
\end{document}